\newtheorem{theorem}{Theorem}
\newtheorem{proposition}{Proposition}
\newtheorem{assumption}{Assumption}
\newcommand{\prt}[1]{\left(#1\right)}
\newcommand{\brk}[1]{\left[#1\right]}
\newcommand{\brc}[1]{\left\{#1\right\}}
\newcommand{\abs}[1]{\left|#1\right|}
\newcommand{\norm}[1]{\lVert#1\rVert}
\newcommand{\E}{\mathbb{E}}
\newcommand{\R}{\mathbb R}
\newcommand{\Fab}{\mathcal F_{\alpha,\beta}}
\newcommand{\optx}[1]{x^{*,#1}}
\newcommand{\inProd}[2]{\langle#1,#2\rangle}
\DeclareMathOperator*{\argmin}{arg\,min}
\title{\LARGE \bf
Random coordinate descent algorithm for open multi-agent systems with complete topology and homogeneous agents
}
\author{Charles Monnoyer de Galland, Renato Vizuete, Julien M. Hendrickx, Paolo Frasca, and Elena Panteley
\thanks{Research supported by the “RevealFlight” ARC at UCLouvain, by the \textit{Incentive Grant for Scientific Research (MIS)} \quotes{Learning from Pairwise Data} of the F.R.S.-FNRS and in part by the Agence Nationale de la Recherche (ANR) via grant “Hybrid And Networked Dynamical sYstems” (HANDY), number ANR-18-CE40-0010.}
\thanks{C.~Monnoyer de Galland and R.~Vizuete equally contributed to this work. C.~Monnoyer de Galland and J. M. Hendrickx are with the ICTEAM institute, UCLouvain, Louvain-la-Neuve, Belgium. C.~Monnoyer de Galland is a FRIA fellow (F.R.S.-FNRS). R.~Vizuete and E.~Panteley are with Universit\'{e} Paris-Saclay, CNRS, CentraleSup\'{e}lec, Laboratoire des signaux et syst\`{e}mes, 91190, Gif-sur-Yvette, France. R.~Vizuete and P.~Frasca are with Univ.\ Grenoble Alpes, CNRS, Inria, Grenoble INP, GIPSA-lab, F-38000 Grenoble, France. (E-mail adresses: ~charles.monnoyer@uclouvain.be; ~renato.vizuete@l2s.centralesupelec.fr; ~julien.hendrickx@uclouvain.be; ~paolo.frasca@gipsa-lab.fr; ~elena.panteley@l2s.centralesupelec.fr).}}
\begin{document}

\maketitle
\thispagestyle{empty}
\pagestyle{empty}

\begin{abstract}
We study the convergence in expectation of the Random Coordinate Descent algorithm (RCD) for solving optimal resource allocations problems in open multi-agent systems, \emph{i.e.}, multi-agent systems that are subject to arrivals and departures of agents. 
Assuming all local functions are strongly-convex and smooth, and their minimizers lie in a given ball, we analyse the evolution of the distance to the minimizer in expectation when the system is
occasionally subject to replacements in addition to the usual iterations of the RCD algorithm.
We focus on complete graphs where all agents interact with each other with the same probability, and provide conditions to guarantee convergence in open system.
Finally, a discussion around the tightness of our results is provided. 
\end{abstract}

\section{Introduction}

We consider the optimal resource allocation problem stated as follows, where a budget $b\in\R^d$ must be distributed among $n$ agents according to some weight distribution $a\in\R^n$ while minimizing the total cost $f$ built upon local costs $f_i:\R^d\to\R$ (the weights $a_i$ are thus scalar):
\begin{align}
    \label{eq:Statement:ResourceAllocProblem}
    \min_{x\in\R^n}\ \ &f(x) = \sum_{i=1}^n f_i(x_i)
    &\hbox{subject to}& &\sum_{i=1}^na_ix_i=b.
\end{align}

Such problems arise in different fields of research, including power systems \cite{yi2016initialization}, actuator networks \cite{teixeira2013distributed}, and games \cite{liang2017distributed}. 
Some of the first approaches introduced to solve \eqref{eq:Statement:ResourceAllocProblem} rely on distributed algorithms based on the well known \textit{gradient descent} \cite{xiao2006optimal}. 
Algorithms of this type however require computing the full gradient of the network, such that the computational complexity can be too high for large systems.

To reduce the computational complexity of gradient-based algorithms, Nesterov introduced in \cite{nesterov2012efficiency} the \emph{coordinate descent} algorithm where optimization steps are performed along only one direction at each iteration.
Several extensions of this algorithm have been developed, including a block coordinate update \cite{richtarik2014iteration}, where more than one direction is optimized at each iteration. 
In such algorithms, the sequence of coordinates in which updates are performed plays an important role, and it is well-known that randomized choices can guarantee convergence.
Hence, \cite{necoara2013random} proposed a \emph{random coordinate descent} (RCD) algorithm, where at each iteration only a pair of local gradients must be evaluated, and where that pair is randomly selected, guaranteeing convergence at the same time as reducing computational complexity.

In some applications of \eqref{eq:Statement:ResourceAllocProblem}, agents are able to join and leave the system at a time-scale similar to that of the process.
Consider for instance the integration of distributed energy resources \cite{dominguez2012decentralized}, where some devices (agents) supplying a total amount of resource (budget) can sometimes be unavailable because of a fault or where local objective-functions might be time-varying (\textit{e.g.}, due to environmental conditions for photovoltaic systems).
When the size of the system increases, the probability for such perturbations to happen at the scale of the whole system increases as well, giving rise to optimization problems in \emph{open multi-agent systems}.
In that case, arrivals and departures have a significant effect on the course of algorithms
and even the most basic algorithms fail to guarantee convergence due to the continuous change of the set of agents. 
In particular, arrivals and departures of agents result in variations of the cost functions during the process, and hence of the location of the minimizer as well, which prevent convergence.

Motivated by the possible changes of the functions $f_i$ in \eqref{eq:Statement:ResourceAllocProblem}, we analyze the performance of the RCD algorithm introduced in \cite{necoara2013random} in a system subject to possible replacements of cost functions at each iteration.
We extend the results of \cite{necoara2013random} by analysing the convergence rate in expectation of the distance to the minimizer in open systems, under the assumption that each iteration is either an RCD update or a replacement.
In this work, we focus on complete graphs such that each pair of agents updates its state at some iteration with the same probability, and we assume that the local objective functions are smooth and strongly convex. 
We then analyze the tightness of our results by considering the particular case of quadratic cost functions, and relying on the PESTO toolbox \cite{PESTO}, which allows deriving exact empirical bounds for convex problems.

\subsection{State of the art}
In the last years, traditional algorithms have been applied and analysed in open multi-agent systems, such as gossiping in \cite{de2020open,hendrickx2017open,OMAS:ARXIV_ITAC:FPL:2020}, dynamic consensus in \cite{franceschelli2020stability,dashti2019dynamic}, and stochastic interactions in \cite{vizuete2020influence,varma2018open}.
Optimization in open system is also getting attention, such as in \cite{hsieh2021optimization}, or in \cite{OpenDo:OpenDGDStability} where the authors studied the stability of the decentralized gradient descent algorithm where the agents try to reach agreement and can be replaced at each iteration.

An alternative line of work on time-varying objective functions, called \emph{online optimization} \cite{DO:online-varyingFunctions,shahrampour2017distributed}, aims at building at each time $t$ an estimate $x^t$ in a way that keeps the \emph{regret function}, commonly defined as
\begin{equation}
    \label{Intro:OCO:RegretAVG}
    Reg_T := \sum\nolimits_{t = 1}^T \prt{f^t(x^t) - \min_x f^t(x)},
\end{equation}
as small as possible. 
Nevertheless, our work is essentially different because the objective of our algorithms is to be at all times as close as possible to the instantaneous minimizer of \eqref{eq:Statement:ResourceAllocProblem}.

\section{Problem statement}
\label{Sec:Statement}

For two vectors $x,y\in\R^n$, we denote by $\inProd{x}{y} = x^\top y = \sum_{i=1}^n x_iy_i$ the standard Euclidean inner product, and the Euclidean norm by $\norm{x} = \prt{x^\top x}^{1/2}$.
We also denote the vector of size $n$ constituted of only ones by $\mathds{1}_n$ and the identity matrix of dimension $n$ by $I_n$.
Let $B(x,r)=\{y:\norm{x-y}\le r \}$ denote the ball of radius $r\ge 0$ centered at $x$.

\subsection{Resource allocation problem}
\label{Sec:Statement:ResourceAlloc}

We consider the resource allocation problem \eqref{eq:Statement:ResourceAllocProblem} where we restrict our attention to 1-dimensional local cost functions $f_i:\R\to\R$ for all $i=1,\ldots,n$, and make the following classical assumption. 
\begin{assumption}
\label{Ass:Statement:Fab}
    Each function $f_i$ is continuously differentiable, $\alpha$-strongly convex (\emph{i.e.}, $f_i(x)-\frac{\alpha}{2}\norm{x}^2$ is convex) and $\beta$-smooth (\emph{i.e.}, $\norm{ f_i'(x)- f_i'(y)}\leq \beta\norm{x-y}$, $\forall x,y$).
\end{assumption}

We let $\Fab$ denote the set containing the functions satisfying Assumption~\ref{Ass:Statement:Fab} let $\kappa=\beta/\alpha$ denote the \emph{condition number} of those functions.
Notice that $f(x) := \sum_{i=1}^n f_i(x_i)$ also satisfies Assumption~\ref{Ass:Statement:Fab}, so that $f\in\Fab$.
This implies that the solution to \eqref{eq:Statement:ResourceAllocProblem}, denoted $x^*$, is unique. 
Moreover, 
$\inProd{a}{x^*}=b$ and $\nabla f(x^*) = \lambda^*a$ for some scalar $\lambda^*\in\R$.

In open systems, the functions $f_i$ can be replaced in the process so that the global minimizer $x^*$ changes along. To ensure that the local cost functions are consistent with each other, and prevent arbitrary changes of functions, and thus of $x^*$, we follow the approach in \cite{OpenDo:OpenDGDStability} and restrict the location of the local minimizers without loss of generality.
\begin{assumption}
\label{Ass:Statement:B(0,1)}
    The minimizer of each function $f_i$ denoted $x_i^*:=\argmin_{x}f_i(x)$ satisfies $x_i^*\in [-1,1]$ and $f_i(x_i^*)=0$.
\end{assumption}

The following assumption restricts our attention to the particular case where a given budget must be allocated among agents with the same priority, or where the budget is provided by a group of homogeneous agents.

\begin{assumption}
\label{Ass:Statement:a=1}
There holds $a = \mathds{1}_n$, and we denote the feasible set of \eqref{eq:Statement:ResourceAllocProblem} in that case by
\begin{equation}
    \label{e:Statement:Sb}
    S_b := \brc{x\in\R^n|\inProd{\mathds{1}}{x}=b}.
\end{equation}
\end{assumption}

\subsection{Random Coordinate Descent algorithm}
\label{Sec:Statement:RCD}

To problem \eqref{eq:Statement:ResourceAllocProblem}, we associate a network constituted of $n$ agents such that each agent $i\in V = \brc{1,\ldots,n}$ has access to a local function $f_i$ and a local variable $x_i\in\R$.
The agents can exchange information according to an undirected and connected graph $G=(V,E)$ where $E\subseteq V\times V$.

The Random Coordinate Descent (RCD) algorithm introduced in \cite{necoara2013random} involves the update of the states of only a pair of neighbouring agents at each iteration, so that the numerical complexity is cheap.
At a given iteration and for some feasible estimate $x$, a pair of agents $(i,j)\in E$ is randomly selected with probability $p_{ij}>0$ to update as
\begin{align*}
    &x_i^+ = x_i + d_i&
    &x_j^+ = x_j + d_j,
\end{align*}
where $d_i$ and $d_j$ are determined by solving
\begin{small}
\begin{equation}
\label{eq:Statement:RCD:minProb}
    \begin{bmatrix}
        d_i\\d_j
    \end{bmatrix}
    = \arg\min_{s:a_is_i+a_js_j=0}
    \left\langle
    \begin{bmatrix}
        f'_i(x_i)\\ f'_j(x_j)
    \end{bmatrix},
    \begin{bmatrix}
        s_i\\s_j
    \end{bmatrix}\right\rangle
    + \frac\beta2\left\lVert
    \begin{bmatrix}
        s_i\\s_j
    \end{bmatrix}
    \right\lVert^2.
\end{equation}
\end{small}

This choice follows the observation that for any $z\in\R^2$, the function $g(z) = f_i(z_1) + f_j(z_2)$ is $\beta$-smooth, and thus satisfies by definition  $\forall z,w\in\R^2$
\begin{equation}
\label{eq:Statement:betaProperty}
    g(z) \leq g(w) + \inProd{\nabla g(w)}{z-w} + \frac\beta2\norm{z-w}^2.
\end{equation}
Solving \eqref{eq:Statement:RCD:minProb} thus amounts to minimizing the right hand side of \eqref{eq:Statement:betaProperty} while ensuring that the next estimate $x^+$ is still feasible.
Following the approach in \cite{necoara2013random}, the problem is solved by
\begin{equation}
\label{eq:Statement:RCD:dClosedForm}
    \begin{bmatrix}
        d_i\\d_j
    \end{bmatrix}
    = -\frac{1}{\beta}
    \begin{bmatrix}
        1-\frac{a_i^2}{a_i+a_j}     &-\frac{a_ia_j}{a_i^2+a_i^2}\\
        -\frac{a_ia_j}{a_i^2+a_i^2}
        &1-\frac{a_j^2}{a_i+a_j} 
    \end{bmatrix}
    \begin{bmatrix}
        f'_i(x_i)\\f'_j(x_j)
    \end{bmatrix}.
\end{equation} 
Under Assumption~\ref{Ass:Statement:a=1}, one gets the following iteration rule
\begin{equation}
\label{eq:Statement:RCD:UpdateRule}
    x^+ = x - \frac{1}{\beta}Q^{ij}\nabla f(x),
\end{equation}
where $Q^{ij}$ is a $n \times n$ matrix filled with zeroes except for the four following entries
\begin{align*}
    &[Q^{ij}]_{i,i} = [Q^{ij}]_{j,j} = \frac12;&
    &[Q^{ij}]_{i,j} = [Q^{ij}]_{j,i} = -\frac12.
\end{align*}

In this preliminary work, we restrain to fully connected networks as in \cite{de2020open,hendrickx2017open}. We have thus all-to-all (possible) communications and each edge has the same probability to be selected at an iteration of the RCD algorithm.
\begin{assumption}
\label{Ass:Statement:CompleteGraph}
    The graph $G=(V,E)$ is fully connected, and for all $(i,j)\in E$ there holds $p_{ij} = p = \frac{2}{n(n-1)}$.
\end{assumption}
Hence, under Assumption~\ref{Ass:Statement:CompleteGraph}, there holds
\begin{equation}
    \label{eq:Statement:LwrtQij}
    \sum_{(i,j)\in E} p_{ij}Q^{ij} = \frac{p}{2}L,
\end{equation}
where $L$ is the Laplacian matrix of $G$, given by \begin{equation}
    \label{eq:Statement:L}
    L = nI_n-\mathds 1_n\mathds 1_n^\top .
\end{equation}

\subsection{Function replacement}
\label{Sec:Statement:Objective}
In this analysis, we consider that the system is open. 
In particular, any agent $i$ can be replaced during the process, in which case it receives a new local objective function satisfying Assumptions~\ref{Ass:Statement:Fab} and \ref{Ass:Statement:B(0,1)} and maintains its label and estimate so that $\inProd{a}{x}=b$ is preserved. 
Let $f_i^k$ denote the local objective function held by the agent labelled $i$ at iteration $k$, then \eqref{eq:Statement:ResourceAllocProblem} can be reformulated in our setting as
\begin{equation}
    \label{eq:Statement:Objective:OpenResourceAllocProblem}
    \min_{x\in S_b} f^k(x) := \sum_{i=1}^n f_i^k(x_i).
\end{equation}
The solution of \eqref{eq:Statement:Objective:OpenResourceAllocProblem} thus changes with replacements, and we denote $\optx{k} := \argmin_{x\in S_b} f^k(x)$.
Let $x^k$ be the estimate of $\optx{k}$ at iteration $k$, we define the following error metric:
\begin{equation}
    \label{eq:Statement:Objective:Ck}
    C^k := \norm{x^k-\optx{k}}^2.
\end{equation}
Our goal is to derive a convergence rate for criterion \eqref{eq:Statement:Objective:Ck} in expectation given by $\E\brk{C^k}$, where $x^k$ is a sequence generated by the Random Coordinate Descent algorithm \eqref{eq:Statement:RCD:UpdateRule} applied in a system subject to possible replacements of agents.

\section{Convergence of RCD in closed system}

In this section we analyze the convergence rate in expectation of the RCD algorithm for criterion \eqref{eq:Statement:Objective:Ck} in closed system for our setting.
In that case, the minimizer $\optx{k}$ does not depend on $k$, since the local objective functions $f_i$ remain the same during the process. Therefore, we refer to that minimizer as $x^*$ in this section.

A related result was presented in \cite{necoara2013random}, where such convergence rate in expectation was derived for the objective value $f(x)-f(x^*)$.
Proposition~\ref{Prop:EConvRate:updates} is thus an extension of that result for our metric, and will serve as an intermediate result for working on open systems. 
Interestingly, one can show that while $f(x)-f(x^*)$ is always decreasing, the metric $\norm{x-x^*}$ can increase for certain choices of edges.
We consider the following iteration rule, which is a generalization of \eqref{eq:Statement:RCD:UpdateRule} for general positive step-sizes $h$:
\begin{equation}
    \label{eq:Prop:EConvRate:updates:rule}
    x^+ = x - hQ^{ij}\nabla f(x).
\end{equation}

\begin{proposition}
\label{Prop:EConvRate:updates}
    Let a function $f(x):= \sum_{i=1}^n f_i(x_i)$ and $x^*:= \argmin_{x\in S_b} f(x)$.
    Under Assumptions~\ref{Ass:Statement:Fab} to \ref{Ass:Statement:CompleteGraph}, for any positive scalar $h\leq 1/\beta$, and for any initial point $x\in S_b$, then the update rule \eqref{eq:Prop:EConvRate:updates:rule}
    applied on the randomly selected pair of agents $(i,j)\in E$ satisfies
    \vspace{-0.1cm}
    \begin{equation}
        \label{eq:Prop:EConvRate:updates:h}
        \E\brk{\norm{x^+-x^*}^2} 
        \leq \prt{1 - h\frac{\alpha}{n-1}}\norm{x-x^*}^2.
    \end{equation}
\end{proposition}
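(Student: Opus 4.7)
The plan is to expand the squared distance after one step, take expectation over the uniformly chosen edge using \eqref{eq:Statement:LwrtQij}, and then close the bound via co-coercivity and strong convexity. Starting from $x^+ = x - hQ^{ij}\nabla f(x)$, expanding $\norm{x^+ - x^*}^2$ gives the usual three-term decomposition. Since $Q^{ij}$ is symmetric and idempotent — its active $2\times 2$ block has eigenvalues $\{0,1\}$ — the quadratic term $\norm{Q^{ij}\nabla f(x)}^2$ equals $\nabla f(x)^\top Q^{ij}\nabla f(x)$. Taking expectation then turns both the linear cross-term and the quadratic contribution into quadratic forms against $\frac{p}{2}L$.

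The next step is to strip the Laplacian using the feasibility and KKT structure. Because $x,x^*\in S_b$, the difference $x-x^*$ lies in $\ker\mathds{1}^\top$, on which $L = nI-\mathds{1}\mathds{1}^\top$ acts as $nI$, so the linear term reduces to $n\inProd{\nabla f(x)}{x-x^*}$. The KKT condition $\nabla f(x^*) = \lambda^*\mathds{1}$ combined with $\mathds{1}^\top(x-x^*) = 0$ lets me substitute $\nabla f(x)$ by $\nabla f(x)-\nabla f(x^*)$ in that inner product at no cost. For the quadratic term, $L\mathds{1} = 0$ permits the same substitution, yielding $(\nabla f(x)-\nabla f(x^*))^\top L(\nabla f(x)-\nabla f(x^*))$, which $L \preceq nI$ bounds by $n\norm{\nabla f(x)-\nabla f(x^*)}^2$. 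After plugging $p=2/(n(n-1))$, the bound involves only $\inProd{\nabla f(x)-\nabla f(x^*)}{x-x^*}$ and $\norm{\nabla f(x)-\nabla f(x^*)}^2$, with coefficients $-\frac{2h}{n-1}$ and $\frac{h^2}{n-1}$ respectively.

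I would then finish by invoking co-coercivity of $\nabla f$ — namely $\norm{\nabla f(x)-\nabla f(x^*)}^2 \leq \beta\inProd{\nabla f(x)-\nabla f(x^*)}{x-x^*}$ for $\beta$-smooth convex functions — which merges the two terms into a single coefficient $-h(2-h\beta)/(n-1)$ on the inner product. Under $h\beta \leq 1$, this coefficient is at most $-h/(n-1)$, and $\alpha$-strong convexity $\inProd{\nabla f(x)-\nabla f(x^*)}{x-x^*}\geq \alpha\norm{x-x^*}^2$ then delivers \eqref{eq:Prop:EConvRate:updates:h}. The main subtlety lies in the quadratic-term reduction: the second-order contribution is a Laplacian quadratic form in $\nabla f(x)$ rather than $\norm{\nabla f(x)}^2$, so co-coercivity is not directly applicable. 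The reduction goes through only because $L$ annihilates $\mathds{1}$, which is precisely the KKT direction of $\nabla f(x^*)$; this alignment is what makes the complete-graph setting so clean, and I would expect any extension to a general graph to require a spectral bound replacing $L \preceq nI$.
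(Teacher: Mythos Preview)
Your argument is correct and follows the same skeleton as the paper: expand the square, use idempotence of $Q^{ij}$, average via \eqref{eq:Statement:LwrtQij} to obtain Laplacian quadratic forms, and then exploit feasibility ($\mathds{1}^\top(x-x^*)=0$) together with the KKT condition $\nabla f(x^*)\parallel\mathds{1}$ to reduce everything to $\inProd{\nabla f(x)-\nabla f(x^*)}{x-x^*}$ and $\norm{\nabla f(x)-\nabla f(x^*)}^2$.

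The only difference is in the closing step. You invoke co-coercivity $\norm{\nabla f(x)-\nabla f(x^*)}^2\leq\beta\inProd{\nabla f(x)-\nabla f(x^*)}{x-x^*}$ (smoothness alone) and then strong convexity $\inProd{\nabla f(x)-\nabla f(x^*)}{x-x^*}\geq\alpha\norm{x-x^*}^2$ separately, obtaining the intermediate rate $1-\tfrac{h(2-h\beta)\alpha}{n-1}$. The paper instead applies the single combined inequality for $\alpha$-strongly convex, $\beta$-smooth functions,
\[
\inProd{\nabla f(x)-\nabla f(x^*)}{x-x^*}\ \geq\ \tfrac{1}{\alpha+\beta}\norm{\nabla f(x)-\nabla f(x^*)}^2+\tfrac{\alpha\beta}{\alpha+\beta}\norm{x-x^*}^2,
\]
yielding the intermediate rate $1-\tfrac{2}{1+\kappa^{-1}}\tfrac{h\alpha}{n-1}$. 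Both are then relaxed to the stated bound \eqref{eq:Prop:EConvRate:updates:h}. The paper's intermediate constant is sharper for $\kappa>1$ (e.g.\ at $h=1/\beta$ it gives $1-\tfrac{2}{(\kappa+1)(n-1)}$ versus your $1-\tfrac{1}{\kappa(n-1)}$), while your route is slightly more elementary in that it avoids the combined inequality. Either way the final statement is the same.
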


\begin{proof}
    Starting from the update rule \eqref{eq:Prop:EConvRate:updates:rule}, there holds
    \begin{align*}
        \E\brk{\norm{x^+-x^*}^2} 
        &= \sum_{(i,j)\in E} p \E\brk{\norm{x - hQ^{ij}\nabla f(x) - x^*}^2}\\
        &= \sum_{(i,j)\in E} p \norm{x - hQ^{ij}\nabla f(x) - x^*}^2.
    \end{align*}
    Let {\small $H(x) = \E\brk{\norm{x^+-x^*}^2}$}. It follows that
    \begin{align*}
        H(x)
        = &\norm{x-x^*}^2
        + h^2\sum\nolimits_{(i,j)\in E} p\norm{Q^{ij}\nabla f(x)}^2\\
        &- 2h\sum\nolimits_{(i,j)\in E} p\inProd{Q^{ij}\nabla f(x)}{x-x^*}.
    \end{align*}
    From \eqref{eq:Statement:LwrtQij}, there holds $\sum_{(i,j)\in E} Q^{ij} = \tfrac12L$.
    Moreover, one has $(Q^{ij})^\top=Q^{ij}$ and $\prt{Q^{ij}}^2=Q^{ij}$, and it follows
    \begin{small}
    \begin{align*}
        \sum_{i=1}^n p\norm{Q^{ij}\nabla f(x)}^2
        &= p\nabla f(x)^\top \prt{\sum_{(i,j)\in E}(Q^{ij})^2} \nabla f(x)\\
        &= \frac{p}{2}\nabla f(x)^\top L\nabla f(x);\\
        \sum_{i=1}^n p\inProd{Q^{ij}\nabla f(x)}{x-x^*}
        &= p\nabla f(x)^\top \prt{\sum_{(i,j)\in E} Q^{ij}}(x-x^*)\\
        &= \frac{p}{2}\nabla f(x)^\top L(x-x^*).
    \end{align*}
    \end{small}
    Hence, using $p = \frac{2}{n(n-1)}$ from Assumption~\ref{Ass:Statement:CompleteGraph}, there holds
    \begin{align*}
        H&(x)
        = \norm{x-x^*}^2 \\
        &+ \frac{1}{n-1}\prt{\tfrac{h^2}{n}\nabla f(x)^\top L\nabla f(x) - 2\tfrac{h}{n}\nabla f(x)^\top L(x-x^*)}.
    \end{align*}
    The optimality conditions of our problem imply $\nabla f(x^*)=\lambda^*\mathds 1_n$ for some $\lambda^*\in\R$, and from \eqref{eq:Statement:L} we have
    $$L\nabla f(x^*) = \mathds 1_nn\lambda^* - \mathds 1_n \mathds 1_n^\top  \mathds 1_n \lambda^* = 0.$$
    Hence, since $L\nabla f(x^*)=0$, since $L=L^\top $, and since the largest eigenvalue of $L$ is $n$, there holds
    \begin{small}
    \begin{align*}
        \tfrac{h^2}{n}\nabla f(x)^\top \!L\nabla f(x)\!
        &=\! \tfrac{h^2}{n}(\nabla f(x)\!-\!\nabla f(x^*))^\top \!L(\nabla f(x)\!-\!\nabla f(x^*))\\
        &\leq h^2\norm{\nabla f(x)-\nabla f(x^*)}^2.
    \end{align*}
    \end{small}
    Moreover, one has $\inProd{\mathds 1_n}{x-x^*}=0$, and using \eqref{eq:Statement:L} yields
    \begin{align*}
        \tfrac{h}{n}\nabla f(x)^\top L(x-x^*)
        &= h\inProd{\nabla f(x) - \tfrac{1}{n}\mathds 1_n^\top \nabla f(x)\mathds 1_n}{x-x^*}\\
        &= h\inProd{\nabla f(x)}{x-x^*}.
    \end{align*}
    Furthermore, since $\inProd{\nabla f(x^*)}{x-x^*}=0$, and since $f$ is $\alpha$-strongly convex and $\beta$-smooth, it follows that
    \begin{align*}
        \inProd{\nabla f(x)}{x-x^*}
        &= \inProd{\nabla f(x)-\nabla f(x^*)}{x-x^*}\\
        &\geq \tfrac{\beta^{-1}\norm{\nabla f(x)-\nabla f(x^*)}^2}{1+\kappa^{-1}} + \tfrac{\alpha\norm{x-x^*}^2}{1+\kappa^{-1}},
    \end{align*}
    where we remind $\kappa=\beta/\alpha$ is the condition number of $f$.
    Re-injecting those expressions into that of $H$ yields
    \begin{align*}
        H(x)
        \leq &\ \norm{x-x^*}^2-\frac{1}{n-1}\prt{\tfrac{2h\alpha}{1+\kappa^{-1}}\norm{x-x^*}^2}\\
        & + \frac{1}{n-1}\prt{\prt{h^2-\tfrac{2h\beta^{-1}}{1+\kappa^{-1}}}\norm{\nabla f(x)-\nabla fx^*)}^2}.
    \end{align*}
    Observe that for $h\leq 1/\beta$, we have
    \begin{align*}
        h^2-\tfrac{2h\beta^{-1}}{1+\kappa^{-1}}\leq 0,
    \end{align*}
    so that for any $h\leq 1/\beta$ there holds
    \begin{align*}
        H(x) \leq \prt{1-\frac{2}{1+\kappa^{-1}}\frac{\alpha h}{n-1}}\norm{x-x^*}^2.
    \end{align*}
    Finally, since $\frac{2}{1+\kappa^{-1}} \geq 1$, there holds
    \begin{align*}
        \E\brk{\norm{x^+-x^*}^2} 
        = H(x)
        \leq \prt{1-\frac{\alpha h}{n-1}}\norm{x-x^*}^2,
    \end{align*}
    which concludes the proof.
\end{proof}

Observe that for $h = 1/\beta$, the iteration rule \eqref{eq:Prop:EConvRate:updates:rule} corresponds to that of the RCD algorithm given in \eqref{eq:Statement:RCD:UpdateRule}, which yields the following convergence rate in closed system
\begin{equation}
    \label{eq:EConvRate:updates:h=1/beta}
    \E\brk{\norm{x^+-x^*}^2} \leq
    \prt{1-\tfrac{1}{(n-1)\kappa}}\norm{x-x^*}^2.
\end{equation}
This also corresponds to the contraction rate observed in an open system upon one iteration where no replacement takes place.
Observe moreover that this rate is linear, and similar to that of a gradient descent algorithm \cite{nesterov2018lectures,MAL-050}.

\section{Convergence of RCD in open system}
\label{Sec:OpenSystems}

We now consider that the system is open and suffers from occasional replacements of agents so that the local objective functions $f_i$ change. 
In particular, when a replacement occurs, then the replaced agent $i$ is uniformly randomly selected, and receives a new objective function $f_i$ satisfying Assumptions~\ref{Ass:Statement:Fab} and \ref{Ass:Statement:B(0,1)} while maintaining its estimate.

Let $U_{ij}$ denote the event that an RCD iteration as defined in \eqref{eq:Statement:RCD:UpdateRule} happens on the pair of agents $(i,j)$, and let $R_i$ denote the event of a replacement of agent $i$ as described above.
Then we define the set of all possible events as 
\begin{equation}
    \label{eq:Statement:EventSet}
    \Xi = \prt{\bigcup_{(i,j)\in E}U_{ij}} \cup \prt{\bigcup_{i\in V} R_i}.
\end{equation}

We consider that at each iteration one event $\xi\in\Xi$ takes place, so that we can define the history of the process up to iteration $k$ as follows:
\begin{equation}
    \label{eq:Statement:History}
    \omega^k = \brc{(1,\xi_1),\ldots,(k,\xi_k)},
\end{equation}
with $\xi_j\in\Xi$ for all $j=1,\ldots,k$.
We will work under the following assumption of statistical independence:
\begin{assumption}
\label{Ass:OpenSystems:Indep}
The events $\xi_i$ constituting any sequence of events $\omega^k$ are independent of each other and of the state of the system, so that at any iteration $i$, the event $\xi_i$ is a RCD update with probability $p_U$, and a replacement with probability $p_R = 1-p_U$.
\end{assumption}
The assumption above guarantees that the replacements and RCD updates happening in the system are independent processes, and allows analyzing the behavior of the RCD algorithm by decoupling the impact of these.
In the remainder of this section, we will analyze the convergence rate of the algorithm by analyzing separately the effect of updates of the algorithm and of replacements on the error metric \eqref{eq:Statement:Objective:Ck}.

Observe that the probabilities $p_U$ and $p_R$ act at the whole system level, and can equivalently be replaced by the corresponding probabilities acting on every single agent and edge on the system. 
In particular, it follows from Assumption~\ref{Ass:OpenSystems:Indep} 
\begin{align}
    \label{eq:OpenSystems:pepaWRTpUpR}
    &p_e = \tfrac{2}{n(n-1)}p_U&
    &p_a = \tfrac1np_R,
\end{align}
where $p_e$ and $p_a$ respectively stand for the probabilities that any given edge gets activated at a RCD update, and that any given agent (whichever it is) is replaced at some iteration.

\subsection{Impact of replacements on the error}
In this section, we analyze how much the minimizer $\optx{k}$ of Problem \eqref{eq:Statement:Objective:OpenResourceAllocProblem} is impacted by replacements, and to what extent the error $\E\brk{C^k}$ is affected by these.
Observe that the way we model replacements legitimates the analysis of the effect of a single change, as only one replacement at most can occur at a given iteration.

We first provide in the next proposition the region in which that minimizer can be located in our setting.

\begin{proposition}
\label{prop:OpenSystems:minLoc}
    Let $\kappa = \beta/\alpha$ denote the condition number of $f$, and let $R_{b,\kappa}:=\sqrt{n}+\prt{1+\frac{\abs{b}}{n}}\sqrt{\kappa n}$. If $f_i$ satisfies Assumptions~\ref{Ass:Statement:Fab} and \ref{Ass:Statement:B(0,1)} for all $i=1,\ldots,n$, then:
    \begin{equation}
        \label{eq:prop:minLoc}
        \argmin_{x\in S_b} f(x)\in B(0,R_{b,\kappa}).
    \end{equation}
\end{proposition}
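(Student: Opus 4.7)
The plan is to exploit the fact that Assumption~\ref{Ass:Statement:B(0,1)} gives us a direct handle on the \emph{unconstrained} minimizer of $f$: stack the local minimizers into the vector $\bar{x} := (\bar{x}_1,\dots,\bar{x}_n)$ where $\bar{x}_i := \argmin_x f_i(x) \in [-1,1]$. Then $\bar{x}$ is the unconstrained minimizer of $f$, with $f(\bar{x}) = 0$ and $\nabla f(\bar{x}) = 0$, and $\|\bar{x}\| \leq \sqrt{n}$ immediately. So by triangle inequality $\|x^*\| \leq \|\bar{x}\| + \|x^* - \bar{x}\| \leq \sqrt{n} + \|x^* - \bar{x}\|$, and all that remains is to show $\|x^* - \bar{x}\| \leq \sqrt{\kappa n}\,(1+|b|/n)$.

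For that bound, I would anchor strong convexity at $\bar{x}$ rather than at $x^*$. Since $\nabla f(\bar{x}) = 0$ and $f(\bar{x}) = 0$, Assumption~\ref{Ass:Statement:Fab} yields $f(y) \geq \tfrac{\alpha}{2}\|y-\bar{x}\|^2$ for every $y\in\R^n$. Applying this at $y=x^*$ gives
\begin{equation*}
\|x^* - \bar{x}\|^2 \leq \tfrac{2}{\alpha}\,f(x^*).
\end{equation*}
So it suffices to upper-bound $f(x^*)$ by exhibiting a convenient feasible point and invoking optimality.

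The natural candidate is the orthogonal projection of $\bar{x}$ onto $S_b$, namely $\tilde{x} := \bar{x} + \bigl(\tfrac{b}{n} - \bar{y}\bigr)\mathds{1}_n$ where $\bar{y} := \tfrac{1}{n}\sum_i \bar{x}_i$. A direct check shows $\inProd{\mathds{1}_n}{\tilde{x}} = b$, so $\tilde{x}\in S_b$ and $f(x^*) \leq f(\tilde{x})$. Now by $\beta$-smoothness of each $f_i$ around $\bar{x}_i$ (using $f_i(\bar{x}_i)=0$, $f_i'(\bar{x}_i)=0$),
\begin{equation*}
f_i(\tilde{x}_i) \leq \tfrac{\beta}{2}(\tilde{x}_i - \bar{x}_i)^2 = \tfrac{\beta}{2}\bigl(\tfrac{b}{n}-\bar{y}\bigr)^2,
\end{equation*}
so that $f(\tilde{x}) \leq \tfrac{n\beta}{2}\bigl(\tfrac{b}{n}-\bar{y}\bigr)^2$. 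Chaining the inequalities gives $\|x^*-\bar{x}\|^2 \leq n\kappa\bigl(\tfrac{b}{n}-\bar{y}\bigr)^2$, and the elementary bound $\bigl|\tfrac{b}{n}-\bar{y}\bigr| \leq \tfrac{|b|}{n} + |\bar{y}| \leq 1 + \tfrac{|b|}{n}$ (using $|\bar{y}|\leq 1$) closes the argument.

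The only real subtlety — and what I expect to be the main pitfall — is the choice of anchor for strong convexity. The seemingly simpler route of anchoring at $x^*$ and comparing to the obvious feasible point $\hat{x}=(b/n)\mathds{1}_n$ gives $\|x^*\| \leq |b|/\sqrt{n} + \sqrt{\kappa n}(1+|b|/n)$, which fails to match $R_{b,\kappa}$ once $|b|>n$. Using $\bar{x}$ as the anchor and $\tilde{x}$ as a parallel shift of $\bar{x}$ keeps the "reference" term bounded by $\sqrt{n}$ independently of $b$, yielding exactly the stated radius.
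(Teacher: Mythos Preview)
Your proof is correct and follows essentially the same idea as the paper: anchor both the strong-convexity lower bound and the smoothness upper bound at the unconstrained minimizer $\bar{x}$, then compare $f(x^*)$ against $f$ evaluated at a convenient feasible point. The paper frames this contrapositively (any $x\in S_b$ outside the ball satisfies $f(x)>f(x_b)\geq f(x^*)$) and uses the simpler feasible point $x_b=\tfrac{b}{n}\mathds 1_n$, bounding $\|x_b-\bar{x}\|^2\leq n(1+|b|/n)^2$ directly; you argue forward via the triangle inequality and use the projection $\tilde{x}$ of $\bar{x}$ onto $S_b$, which is a slightly sharper choice but lands on the same final radius.
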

\begin{proof}
Let $x\notin B(0,R_{b,\kappa})$ such that $x\in S_b$, and let $\bar x^* = \argmin_x f(x)$ denote the minimizer of $f$ without the constraint. 
Observe that from Assumption~\ref{Ass:Statement:B(0,1)}, there holds $f(\bar x^*)=0$ since it amounts to evaluating every local function $f_i$ at their minimal values.
Moreover, we have $\bar x^* \in B(0,1)^n$ so that $\norm{\bar x^*}\leq \sqrt{n}$, and it follows that $\norm{x-\bar x^*}>R_{b,\kappa}-\sqrt{n}$.
Hence, since $f$ is $\alpha$-strongly convex, there holds
\begin{equation*}
    f(x) 
    \geq \frac\alpha2\norm{x-\bar x^*}^2
    > \tfrac\alpha2\prt{1+\tfrac{\abs{b}}{n}}^2\kappa n
    = \tfrac{\beta n}{2}\prt{1+\tfrac{\abs{b}}{n}}^2.
\end{equation*}
Now let $x_b := \tfrac{b}{n}\mathds 1_n$.
Since $f$ is $\beta$-smooth, and since $f(\bar x^*)=0$ from Assumption~\ref{Ass:Statement:B(0,1)}, there holds
\begin{align*}
    f(x_b) \leq \tfrac{\beta}{2}\norm{x_b-\bar x^*}^2 
    \leq \tfrac{\beta n}{2} \prt{1+\tfrac{\abs{b}}{n}}^2.
\end{align*}

Hence, since $x_b\in S_b$, there holds
\begin{align*}
    f(x) > \tfrac{\beta n}{2}\prt{1+\tfrac{\abs{b}}{n}}^2 
    \geq f(x_b) 
    \geq f(x^*),
\end{align*}
and we conclude that $x$ cannot be the minimizer of \eqref{eq:Statement:Objective:OpenResourceAllocProblem}.
\end{proof}

We can now analyze the impact of a function change on the location of the minimizer.
Without loss of generality, we assume that the function that gets replaced is $f_n$, and for the $n+1$ functions $f_1,f_2,\ldots,f_{n-1},f_n^{(1)},f_n^{(2)}$ satisfying Assumptions~\ref{Ass:Statement:Fab} and \ref{Ass:Statement:B(0,1)}, we define
\begin{align}
    \label{eq:OpenSystems:x+&x-}
    x^{(1)} &:= \argmin_{x\in S_b} \prt{\sum\nolimits_{i=1}^{n-1}f_i(x_i) + f_n^{(1)}(x_n)};\nonumber\\
    x^{(2)} &:= \argmin_{x\in S_b} \prt{\sum\nolimits_{i=1}^{n-1}f_i(x_i) + f_n^{(2)}(x_n)}.
\end{align}
We provide in the next proposition an upper bound on $\norm{x^{(2)}~-~x^{(1)}}^2$, built upon Proposition~\ref{prop:OpenSystems:minLoc}.

\begin{proposition}
\label{Prop:OpenSystems:minChange}
    Consider $x^{(1)}$ and $x^{(2)}$ from \eqref{eq:OpenSystems:x+&x-}, then
    \begin{equation}
        \label{eq:Prop:OpenSystems:minChange}
        \norm{x^{(2)}-x^{(1)}}^2\leq        4n\kappa\prt{1+\tfrac{1}{\sqrt{\kappa}}+\tfrac{\abs{b}}{n}}^2.
    \end{equation}
\end{proposition}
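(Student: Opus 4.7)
The plan is to use Proposition~\ref{prop:OpenSystems:minLoc} to confine both $x^{(1)}$ and $x^{(2)}$ to the same ball $B(0,R_{b,\kappa})$, and then bound their distance via the triangle inequality. Both definitions in \eqref{eq:OpenSystems:x+&x-} are instances of the minimization problem covered by Proposition~\ref{prop:OpenSystems:minLoc}, since $f_1,\ldots,f_{n-1},f_n^{(1)}$ (respectively $f_n^{(2)}$) all satisfy Assumptions~\ref{Ass:Statement:Fab} and \ref{Ass:Statement:B(0,1)}, and the constants $\alpha,\beta,\kappa,b,n$ defining $R_{b,\kappa}$ are identical for the two problems. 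Hence $x^{(1)},x^{(2)}\in B(0,R_{b,\kappa})$, and therefore
\begin{equation*}
    \norm{x^{(2)}-x^{(1)}} \leq \norm{x^{(1)}}+\norm{x^{(2)}} \leq 2R_{b,\kappa}.
\end{equation*}

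Squaring and substituting $R_{b,\kappa}=\sqrt{n}+(1+\abs{b}/n)\sqrt{\kappa n}$ gives
\begin{equation*}
    \norm{x^{(2)}-x^{(1)}}^2 \leq 4\bigl(\sqrt{n}+(1+\tfrac{\abs{b}}{n})\sqrt{\kappa n}\bigr)^2
    = 4n\bigl(1+\sqrt{\kappa}(1+\tfrac{\abs{b}}{n})\bigr)^2,
\end{equation*}
and the last step is the purely algebraic rewriting $1+\sqrt{\kappa}(1+\abs{b}/n)=\sqrt{\kappa}\bigl(\tfrac{1}{\sqrt{\kappa}}+1+\tfrac{\abs{b}}{n}\bigr)$, which yields exactly the bound \eqref{eq:Prop:OpenSystems:minChange}.

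There is no real technical obstacle: the heavy lifting was already done in Proposition~\ref{prop:OpenSystems:minLoc}, which converted strong convexity and smoothness into an explicit localization of the minimizer. The only thing to be careful about is to verify that the constant $R_{b,\kappa}$ genuinely does not depend on which of $f_n^{(1)},f_n^{(2)}$ is chosen (it depends only on $n$, $\kappa$ and $b$), so that a single ball confines both minimizers and the triangle inequality closes the argument.
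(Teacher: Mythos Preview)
Your proof is correct and follows essentially the same approach as the paper: both apply Proposition~\ref{prop:OpenSystems:minLoc} to confine $x^{(1)}$ and $x^{(2)}$ in $B(0,R_{b,\kappa})$ and then bound $\norm{x^{(2)}-x^{(1)}}^2$ by $4R_{b,\kappa}^2$. The only cosmetic difference is that the paper uses $\norm{x^{(2)}-x^{(1)}}^2\le 2\norm{x^{(1)}}^2+2\norm{x^{(2)}}^2$ instead of your triangle-inequality-then-square route, but both yield the same bound.
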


\begin{proof}
    From Proposition~\ref{prop:OpenSystems:minLoc}, the minimizer $\optx{k}$ of Problem \eqref{eq:Statement:Objective:OpenResourceAllocProblem} satisfies
    \begin{align*}
        \norm{\optx{k}}^2 \leq n\prt{1+\prt{1+\tfrac{\abs{b}}{n}}\sqrt{\kappa}}^2.
    \end{align*}
    Hence, the conclusion follows from
    \begin{align*}
        \norm{x^{(2)}-x^{(1)}}^2 \leq 2\prt{\norm{x^{(2)}}^2+\norm{x^{(1)}}^2},
    \end{align*}
    as both $x^{(2)}$ and $x^{(1)}$ are such minimizers.
\end{proof}

The bound obtained in Proposition~\ref{Prop:OpenSystems:minChange} builds on the possibility for all agents to be replaced at once in a single iteration.
As a consequence, the results we derive using it are valid for that more general setting.
This also means that this result is a source of conservatism in the particular setting where only one agent can get replaced at a time, and it is expected that a tighter bound can be obtained in that case, especially regarding its dependence in $n$.
This possibility is discussed in detail in Section~\ref{Sec:OpenSystems:QuadFun}, through the study of a specific case, and based on the PESTO toolbox for performance estimation \cite{PESTO}.
However, the analysis in general remains open shall be the object of future work.

We can now evaluate the effect of replacements on the expected error $\E\brk{C^k}$.

\begin{proposition}
\label{Prop:OpenSystems:EConvRate:repl}
    Let $R$ denote the event of a replacement happening in the system. Then there holds
    \begin{equation}
        \label{eq:Prop:OpenSystems:EConvRate:repl}
        \E\brk{C^{k+1}|R}
        \leq 2\E\brk{C^k}+8n\kappa\prt{1+\tfrac{1}{\sqrt{\kappa}}+\tfrac{\abs{b}}{n}}^2.
    \end{equation}
\end{proposition}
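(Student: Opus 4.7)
The plan is to observe that under a replacement event $R$, the estimate is preserved ($x^{k+1}=x^k$, since a replaced agent keeps its current state by assumption), while only the cost functions, and hence the minimizer, change. So $C^{k+1}=\norm{x^k-\optx{k+1}}^2$, and we need to relate this to $C^k=\norm{x^k-\optx{k}}^2$ together with a bound on the displacement of the minimizer $\norm{\optx{k+1}-\optx{k}}$.

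First, I would decompose using the elementary inequality $\norm{u+v}^2\leq 2\norm{u}^2+2\norm{v}^2$ with $u=x^k-\optx{k}$ and $v=\optx{k}-\optx{k+1}$, to obtain
\begin{equation*}
    C^{k+1}=\norm{x^k-\optx{k+1}}^2 \leq 2\,\norm{x^k-\optx{k}}^2+2\,\norm{\optx{k}-\optx{k+1}}^2
    = 2C^k+2\norm{\optx{k}-\optx{k+1}}^2.
\end{equation*}

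Next, conditional on $R$, exactly one function is replaced (the identity of the agent and the new function being drawn according to the process described in Section IV). Regardless of which agent is replaced and which new admissible function is picked, both $\optx{k}$ and $\optx{k+1}$ are minimizers of problems of the form \eqref{eq:Statement:Objective:OpenResourceAllocProblem} whose local functions all satisfy Assumptions~\ref{Ass:Statement:Fab} and \ref{Ass:Statement:B(0,1)}, and they differ in exactly one coordinate of the cost. This is precisely the setting of Proposition~\ref{Prop:OpenSystems:minChange} (which was stated WLOG for the replacement of the $n$-th agent, and whose bound does not depend on the identity of the replaced agent), so
\begin{equation*}
    \norm{\optx{k}-\optx{k+1}}^2 \leq 4n\kappa\prt{1+\tfrac{1}{\sqrt{\kappa}}+\tfrac{\abs{b}}{n}}^2.
\end{equation*}

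Finally, I would take the expectation of the first inequality conditional on $R$. The deterministic bound from Proposition~\ref{Prop:OpenSystems:minChange} passes through unchanged, and since by Assumption~\ref{Ass:OpenSystems:Indep} the replacement event is independent of the state of the system, $\E\brk{C^k\mid R}=\E\brk{C^k}$. Combining these yields the desired inequality, with the factor $8=2\cdot 4$ coming from the doubling in the decomposition. The whole argument is essentially routine once Proposition~\ref{Prop:OpenSystems:minChange} is in hand; the only subtlety to flag is the independence step that justifies dropping the conditioning on $C^k$, which is why Assumption~\ref{Ass:OpenSystems:Indep} is essential here.
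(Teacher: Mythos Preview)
Your proof is correct and follows essentially the same approach as the paper: the paper also uses $x^{k+1}=x^k$ under a replacement, the same quadratic decomposition $\norm{u+v}^2\leq 2\norm{u}^2+2\norm{v}^2$, Proposition~\ref{Prop:OpenSystems:minChange} for the displacement of the minimizer, and Assumption~\ref{Ass:OpenSystems:Indep} to handle the conditioning. The only cosmetic difference is that the paper first fixes the history $\omega^{k-1}$ and conditions on the specific replaced agent $R_i$ before averaging, whereas you invoke independence directly to obtain $\E\brk{C^k\mid R}=\E\brk{C^k}$; both are valid.
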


\begin{proof}
    Let us fix some event sequence $\omega^{k-1}$. Using Assumption~\ref{Ass:OpenSystems:Indep}, there holds
    \begin{align*}
        \E\brk{C^{k+1}|R,\omega^{k-1}} = \sum_{i=1}^n p_i\E\brk{C^{k+1}|R_i,\omega^{k-1}},
    \end{align*}
    where $p_i$ is the probability that agent $i$ is the replaced agent at the occurrence of a replacement.
    
    Let $\optx{k}$ denote the minimizer of \eqref{eq:Statement:Objective:OpenResourceAllocProblem} before the replacement, so that $C^k = \norm{x^k-\optx{k}}^2$.
    In the event $R_i$, the estimates satisfy $x^{k+1}=x^k$, and there holds
    \begin{align*}
        C^{k+1}
        &= \norm{x^{k+1}-\optx{k+1}}^2\\
        &\leq \prt{\norm{x^k-\optx{k}} + \norm{\optx{k}-\optx{k+1}}}^2\\
        &\leq 2\prt{C^k + \norm{\optx{k}-\optx{k+1}}^2},
    \end{align*}
    where we have used the fact that $(a+b)^2\leq 2a^2+2b^2$ for $a,b\in\R$ to obtain the last inequality.
    It then follows from Proposition~\ref{Prop:OpenSystems:minChange} that
    \begin{align*}
        \norm{\optx{k}-\optx{k+1}}^2
        \leq 4n\kappa\prt{1+\tfrac{1}{\sqrt{\kappa}}+\tfrac{\abs{b}}{n}}^2,
    \end{align*}
    so that
    \begin{align*}
        \E\brk{C^{k+1}|R_i,\omega^{k-1}}
        \leq 2 C^k+8n\kappa\prt{1+\tfrac{1}{\sqrt{\kappa}}+\tfrac{\abs{b}}{n}}^2.
    \end{align*}
    The conclusion then follows from $p_i=1/n$ for all $i$ by definition and from taking the expectation over $\omega^{k-1}$.
\end{proof}

\subsection{Convergence rate}
\label{Sec:OpenSyetem:ConvRates}

We now analyze the convergence in expectation of the RCD algorithm when the system is subject to replacements.
Relying on the definition of the replacement process, our approach allows decoupling the effects of the algorithm and of replacements by considering that either a replacement or an update of the algorithm happens at each iteration.
Therefore, our results strongly depend on the analysis of the effect of replacement events obtained in the previous section.
Moreover, our methodology can be extended to different algorithms than the Random Coordinate Descent, as the impact of function changes is independent of the algorithm.

We provide in the following theorem a convergence rate in expectation for our error metric \eqref{eq:Statement:Objective:Ck} in a  system subject to replacements.

\begin{theorem}
\label{Thm:OpenSystems:ConvRate:generalFi}
    Under Assumptions~\ref{Ass:Statement:Fab} to \ref{Ass:OpenSystems:Indep}, the iteration rule \eqref{eq:Statement:RCD:UpdateRule} applied on a system subject to replacements generates a sequence of estimates $x^k$ satisfying for all $k\geq0$
    \begin{align}
        \label{eq:Thm:OpenSystems:ConvRate:generalFi}
        \E\brk{C^{k+1}} \leq \prt{2-p_U\prt{1+\tfrac{1}{(n-1)\kappa}}}\E\brk{C^k} + \Gamma,
    \end{align}
    with
    \begin{equation}
        \label{eqThm:OpenSystems:ConvRate:generalFi:Gamma}
        \Gamma =         8(1-p_U)\prt{1+\tfrac{1}{\sqrt{\kappa}}+\tfrac{\abs{b}}{n}}^2n\kappa.
    \end{equation}
\end{theorem}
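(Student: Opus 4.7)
The plan is to decouple the effect of RCD updates and replacements at iteration $k+1$ by conditioning on the type of event, invoking Proposition~\ref{Prop:EConvRate:updates} in the first case and Proposition~\ref{Prop:OpenSystems:EConvRate:repl} in the second. By Assumption~\ref{Ass:OpenSystems:Indep}, the event at iteration $k+1$ is independent of the past history $\omega^k$, and hence of $x^k$ and $\optx{k}$; it is an RCD update with probability $p_U$ and a replacement with probability $1-p_U$. The law of total expectation then gives
\begin{equation*}
    \E\brk{C^{k+1}} = p_U\,\E\brk{C^{k+1}\mid U} + (1-p_U)\,\E\brk{C^{k+1}\mid R},
\end{equation*}
where $U$ denotes the occurrence of an RCD update (any pair $(i,j)$) and $R$ a replacement (any agent).

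For the $U$ term, I would first note that the iterations preserve feasibility: the RCD rule \eqref{eq:Statement:RCD:UpdateRule} keeps the estimate in $S_b$, and a replacement keeps $x^k$ unchanged, hence also in $S_b$. Since the minimizer $\optx{k}$ does not change during a RCD iteration, Proposition~\ref{Prop:EConvRate:updates} applied with $h=1/\beta$ yields
\begin{equation*}
    \E\brk{C^{k+1}\mid U,\omega^{k-1}}
    \leq \prt{1 - \tfrac{1}{(n-1)\kappa}}C^k,
\end{equation*}
and taking expectation over $\omega^{k-1}$ gives the corresponding bound on $\E\brk{C^{k+1}\mid U}$. For the $R$ term, Proposition~\ref{Prop:OpenSystems:EConvRate:repl} directly provides
\begin{equation*}
    \E\brk{C^{k+1}\mid R}
    \leq 2\E\brk{C^k} + 8n\kappa\prt{1+\tfrac{1}{\sqrt{\kappa}}+\tfrac{\abs{b}}{n}}^2.
\end{equation*}

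Substituting both bounds and collecting the coefficient of $\E\brk{C^k}$ gives
\begin{equation*}
    p_U\prt{1-\tfrac{1}{(n-1)\kappa}} + 2(1-p_U)
    = 2 - p_U\prt{1+\tfrac{1}{(n-1)\kappa}},
\end{equation*}
while the constant term becomes exactly $\Gamma$ as defined in \eqref{eqThm:OpenSystems:ConvRate:generalFi:Gamma}. This yields \eqref{eq:Thm:OpenSystems:ConvRate:generalFi}.

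The proof is essentially a bookkeeping exercise; the real content sits in the two auxiliary propositions. The only subtle point I would be careful about is the conditioning: one must use Assumption~\ref{Ass:OpenSystems:Indep} to justify that conditioning on $U$ or $R$ does not alter the distribution of $x^k$ and $\optx{k}$, so that Propositions~\ref{Prop:EConvRate:updates} and~\ref{Prop:OpenSystems:EConvRate:repl} can be applied pathwise on $\omega^{k-1}$ before averaging. Beyond that, no further obstacle is expected.
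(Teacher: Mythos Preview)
Your proposal is correct and follows exactly the same approach as the paper: condition on whether the event is an RCD update or a replacement, invoke Proposition~\ref{Prop:EConvRate:updates} (with $h=1/\beta$) and Proposition~\ref{Prop:OpenSystems:EConvRate:repl} respectively, and combine via $p_U+p_R=1$. Your explicit verification of the coefficient $2 - p_U\bigl(1+\tfrac{1}{(n-1)\kappa}\bigr)$ and your remark on the role of Assumption~\ref{Ass:OpenSystems:Indep} in justifying the conditioning are slightly more detailed than the paper, but the argument is identical.
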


\begin{proof}
    Let $U$ and $R$ respectively denote the occurence of a RCD update and of a replacement.
    There holds
    \begin{align*}
        \E\brk{C^{k+1}}
        &= p_U\E\brk{C^{k+1}|U} + p_R\E\brk{C^{k+1}|R},
    \end{align*}
    where we remind $p_U$ stands for the probability that an event is a RCD iteration, and $p_R$ the complementary probability that an event is a replacement, so that $p_U+p_R=1$.
    
    The first term corresponds to the convergence rate in expectation of a RCD iteration in closed system with a step-size of $1/\beta$. Hence, from Proposition~\ref{Prop:EConvRate:updates}, there holds
    \begin{align*}
        \E\brk{C^{k+1}|U}
        &\leq \prt{1-\tfrac{1}{(n-1)\kappa}}\E\brk{C^k}.
    \end{align*}
    
    Similarly, the second term is obtained from Proposition~\ref{Prop:OpenSystems:EConvRate:repl} and there holds
    \begin{align*}
        \E\brk{C^{k+1}|R} 
        \leq 2\E\brk{C^k}+8n\kappa\prt{1+\tfrac{1}{\sqrt{\kappa}}+\tfrac{\abs{b}}{n}}^2.
    \end{align*}
    
    Combining those expressions, and using the fact that $p_U+p_R=1$ concludes the proof.
\end{proof}

The convergence rate obtained in Theorem~\ref{Thm:OpenSystems:ConvRate:generalFi} allows upper bounding the performance of the RCD algorithm under replacements events.

First observe that convergence is guaranteed as long as the probability for an event to be an RCD update $p_U$ satisfies
\begin{align}
    \label{eq:OpenSystems:ConvGuarantee:pU}
    p_U > \frac{\kappa(n-1)}{\kappa( n-1)+1},
\end{align}
which corresponds to the worst-case contraction rate guaranteeing contraction in expectation at each iteration.

Let us denote $\rho_R := p_R/p_U$ the expected number of replacements happening between two RCD updates in the whole system.
Then one can reformulate \eqref{eq:Thm:OpenSystems:ConvRate:generalFi} in terms of $\rho_R$ using the fact that $p_U = \tfrac{1}{1+\rho_R}$, and it follows that convergence is guaranteed as long as
\begin{align}
    \label{eq:OpenSystems:ConvGuarantee:rhoR}
    \rho_R < \frac{1}{(n-1)\kappa},
\end{align}
namely as long as on average at most one replacement happens every $(n-1)\kappa$ RCD updates.
This is equivalently formulated in terms of $p_a$ and $p_e$ which we remind respectively denote the probability that at an event a particular agent is replaced and a particular pair of agents performs a RCD update (see \eqref{eq:OpenSystems:pepaWRTpUpR}), and it follows that 
$p_a < \tfrac{1}{2\kappa}p_e$. 

Observe moreover that the recurrence equation \eqref{eq:Thm:OpenSystems:ConvRate:generalFi} can be solved, yielding
\begin{equation}
    \label{eq:EConvRate:Open:CkwrtC0}
    \E\brk{C^k} - \gamma 
    \leq \prt{1 + \frac{\rho_R - \tfrac{1}{(n-1)\kappa}}{1+\rho_R}}^k\prt{\E\brk{C^0} -\gamma},
\end{equation}
where 
\begin{equation}
    \label{eq:EConvRate:Open:CkwrtC0:Gamma'}
    \gamma 
    = 8n\kappa\frac{\prt{1+\tfrac{1}{\sqrt{\kappa}} + \tfrac{\abs{b}}{n}}\rho_R}{\tfrac{1}{(n-1)\kappa}-\rho_R},
\end{equation}
so that provided convergence occurs, there holds
$$
\lim_{k\to\infty}\E\brk{C^k}\le\gamma.
$$

Observe that conservatism is induced by the term $\rho_R$ in the numerator of the contraction rate of \eqref{eq:EConvRate:Open:CkwrtC0}. 
It exhibits how replacements can get in the way of convergence.
In particular, as $\rho_R$ increases, $E\brk{C^k}$ is expected to grow unbounded.
Conversely, with $\rho_R$ decreasing, it is expected that $\gamma\sim \rho_R(n\kappa)^2$, until $\gamma\to0$ as $\rho_R\to0$ (\emph{i.e.}, in total absence of replacements, or equivalently as $p_U\to1$). In that case, one retrieves the contraction rate of Proposition~\ref{Prop:EConvRate:updates}, and
$$
\E\brk{C^k} \leq \prt{1-\tfrac{1}{(n-1)\kappa}}^k\E\brk{C^0}.
$$

\subsection{Tightness Analysis}
\label{Sec:OpenSystems:QuadFun}

A critical part determining the tightness of our result is the analysis of the impact of a function change from Proposition~\ref{Prop:OpenSystems:minChange}, currently in $O(n\kappa)$.
That result is most likely conservative because it includes the possibility for all the functions to be replaced at once, whereas only replacements of single functions are allowed by our model.
In this section, we show why we expect a possible improvement of that result that does not scale with $n$, with two different approaches.

\paragraph{Quadratic functions}
We consider the particular case where every local objective function is quadratic, as defined in the following assumption.

\begin{assumption}
\label{Ass:QuadFun}
    For all $i$, there holds $f_i(x_i) = \theta_i(x_i-\mu_i)^2$, for some $\theta_i \in \frac12[\alpha,\beta]$, and for some $\mu_i\in[-1,1]$.
\end{assumption}

Under Assumption~\ref{Ass:QuadFun}, it is possible to obtain an alternative result for Proposition~\ref{Prop:OpenSystems:minChange} in $O(\kappa^6)$ that yields the following theorem that is proved in Appendix~\ref{Sec:Appendix:ProofQuadFun}.

\begin{theorem}
\label{Thm:OpenSystems:ConvRate:QuadFun}
    Under Assumptions~\ref{Ass:Statement:Fab} to \ref{Ass:QuadFun}, the iteration rule \eqref{eq:Statement:RCD:UpdateRule} applied on a system subject to replacements generates a sequence of estimates $x^k$ satisfying for all $k\geq0$
    \begin{align}
        \label{eq:Thm:OpenSystems:ConvRate:QuadFun}
        \E\brk{C^{k+1}} \leq \prt{2-p_U\prt{1+\tfrac{1}{(n-1)\kappa}}}\E\brk{C^k} + \Gamma',
    \end{align}
    with
    \begin{small}
    \begin{equation*}
        \label{eqThm:OpenSystems:ConvRate:QuadFun:Gamma}
        \Gamma' = (1-p_U)8\prt{\tfrac{\kappa^3+\kappa n-2}{\kappa n}+\tfrac{(\vert b\vert+n)^2(\kappa-1)^2\kappa^2\prt{\kappa^2n^2+n-1}}{n^4}}.
    \end{equation*}
    \end{small}
\end{theorem}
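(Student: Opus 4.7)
The plan is to mirror the proof of Theorem~\ref{Thm:OpenSystems:ConvRate:generalFi} verbatim, reusing the decomposition
\[
\E[C^{k+1}] = p_U\,\E[C^{k+1}\mid U] + p_R\,\E[C^{k+1}\mid R]
\]
together with Proposition~\ref{Prop:EConvRate:updates} (which bounds the update term by $(1-\tfrac{1}{(n-1)\kappa})\E[C^k]$ and does not rely on the shape of the $f_i$), and replacing only the replacement ingredient Proposition~\ref{Prop:OpenSystems:minChange} by a sharper bound that exploits Assumption~\ref{Ass:QuadFun}. Substituting the new bound into the scheme of Proposition~\ref{Prop:OpenSystems:EConvRate:repl}, and combining with the update estimate using $p_U + p_R = 1$, yields \eqref{eq:Thm:OpenSystems:ConvRate:QuadFun} with a constant $\Gamma'$ that depends on the replacement bound.

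To obtain the new replacement bound, I exploit the closed form of the constrained minimizer for quadratic costs. Writing $\Sigma = \sum_j \theta_j^{-1}$ and $M = \sum_j \mu_j$, the KKT conditions for $\min_{x\in S_b}\sum_i\theta_i(x_i-\mu_i)^2$ give
\[
x_i^* = \mu_i + \frac{1}{\theta_i}\cdot\frac{b - M}{\Sigma},
\]
so replacing only $(\theta_n,\mu_n)$ by $(\theta_n',\mu_n')$ produces $\Sigma' = \Sigma + \theta_n'^{-1} - \theta_n^{-1}$ and $M' = M + \mu_n' - \mu_n$, and for $i\neq n$ the displacement $x_i^{(2)}-x_i^{(1)}$ depends on the data only through the scalar Lagrange multiplier $(b-M')/\Sigma' - (b-M)/\Sigma$. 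The key point is that this scalar difference admits the decomposition
\[
\frac{b-M'}{\Sigma'} - \frac{b-M}{\Sigma}
= \frac{M-M'}{\Sigma'} + \frac{(b-M)(\Sigma-\Sigma')}{\Sigma\Sigma'},
\]
in which the numerators are bounded by constants of order $1$ (via $|\mu_n|,|\mu_n'|\le 1$ and $|\theta_n^{-1}-\theta_n'^{-1}|\lesssim (\kappa-1)/\beta$), while the denominators $\Sigma,\Sigma'$ scale like $n/\beta$. Consequently, each coordinate $i\neq n$ moves by $O(1/n)$, the sum of the $n-1$ squared displacements is $O(1/n)$, and the single coordinate $n$ contributes a bounded $O(\kappa)$ term; the resulting $\|x^{(2)}-x^{(1)}\|^2$ bound is independent of $n$ up to lower-order corrections and scales as $\kappa^6$ after accounting for the $(b-M)\sim(|b|+n)$ factor and the $\Sigma^{-2}\sim n^{-2}/\beta^{-2}$ factor. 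This is precisely what is needed to produce the expression for $\Gamma'$ in the theorem statement.

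The substitution into the replacement step is then identical to the argument of Proposition~\ref{Prop:OpenSystems:EConvRate:repl}: conditioning on which agent is replaced, applying the triangle and $(a+b)^2\le 2a^2+2b^2$ inequalities, averaging with the uniform $p_i = 1/n$, and using the fact that $x^{k+1}=x^k$ during a replacement. The combination with Proposition~\ref{Prop:EConvRate:updates} then produces \eqref{eq:Thm:OpenSystems:ConvRate:QuadFun}.

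The main obstacle is the bookkeeping in the algebra for the minimizer-displacement bound: one must carefully control the cross-terms $(b-M)(\Sigma-\Sigma')/(\Sigma\Sigma')$, keep track of the worst-case over $\theta_n,\theta_n'\in[\alpha/2,\beta/2]$ and $\mu_n,\mu_n'\in[-1,1]$, and aggregate squared displacements across the $n-1$ unchanged coordinates so that the resulting bound matches the precise polynomial in $n$ and $\kappa$ appearing in $\Gamma'$. This is the calculation deferred to Appendix~\ref{Sec:Appendix:ProofQuadFun}; everything else in the proof is a direct transcription of the steps already used for Theorem~\ref{Thm:OpenSystems:ConvRate:generalFi}.
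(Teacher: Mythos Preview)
Your proposal is correct and follows essentially the same approach as the paper: the appendix proves Proposition~\ref{Prop:QuadBound} by computing the constrained minimizer in closed form via Lagrange multipliers, bounding $\norm{x^{(2)}-x^{(1)}}^2$ through exactly the kind of decomposition you describe (the paper first treats the case where only $\theta_n$ varies and then absorbs the $\mu_n$ variation via $(a+b)^2\le 2a^2+2b^2$, which is the same split as your $(M-M')/\Sigma' + (b-M)(\Sigma-\Sigma')/(\Sigma\Sigma')$), and then plugs this bound into the scheme of Theorem~\ref{Thm:OpenSystems:ConvRate:generalFi} in place of Proposition~\ref{Prop:OpenSystems:minChange}. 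The only difference is cosmetic organization of the algebra.
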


The difference between Theorems~\ref{Thm:OpenSystems:ConvRate:generalFi} and \ref{Thm:OpenSystems:ConvRate:QuadFun} lies in the terms $\Gamma$ and $\Gamma'$, which are respectively in $O(\kappa n)$ and $O(\kappa^6)$.
That difference illustrates the possible improvement achievable for our bound with respect to $n$ at the cost of its tightness in $\kappa$. 
Fig.~\ref{Fig:pU_plot} presents the results of the computations of $\E\brk{C^k}$ based on 10000 realizations of the process and the upper bound given by \eqref{eq:Thm:OpenSystems:ConvRate:QuadFun} for a network constituted of $n=5$ agents, with $\kappa=1.2$, $p_U=0.95$ and $b=1$.
The figure seems to confirm the tightness of the convergence rate derived for quadratic functions provided that $\kappa$ is not too large.

\begin{figure}
\centering
\includegraphics[width=8cm]{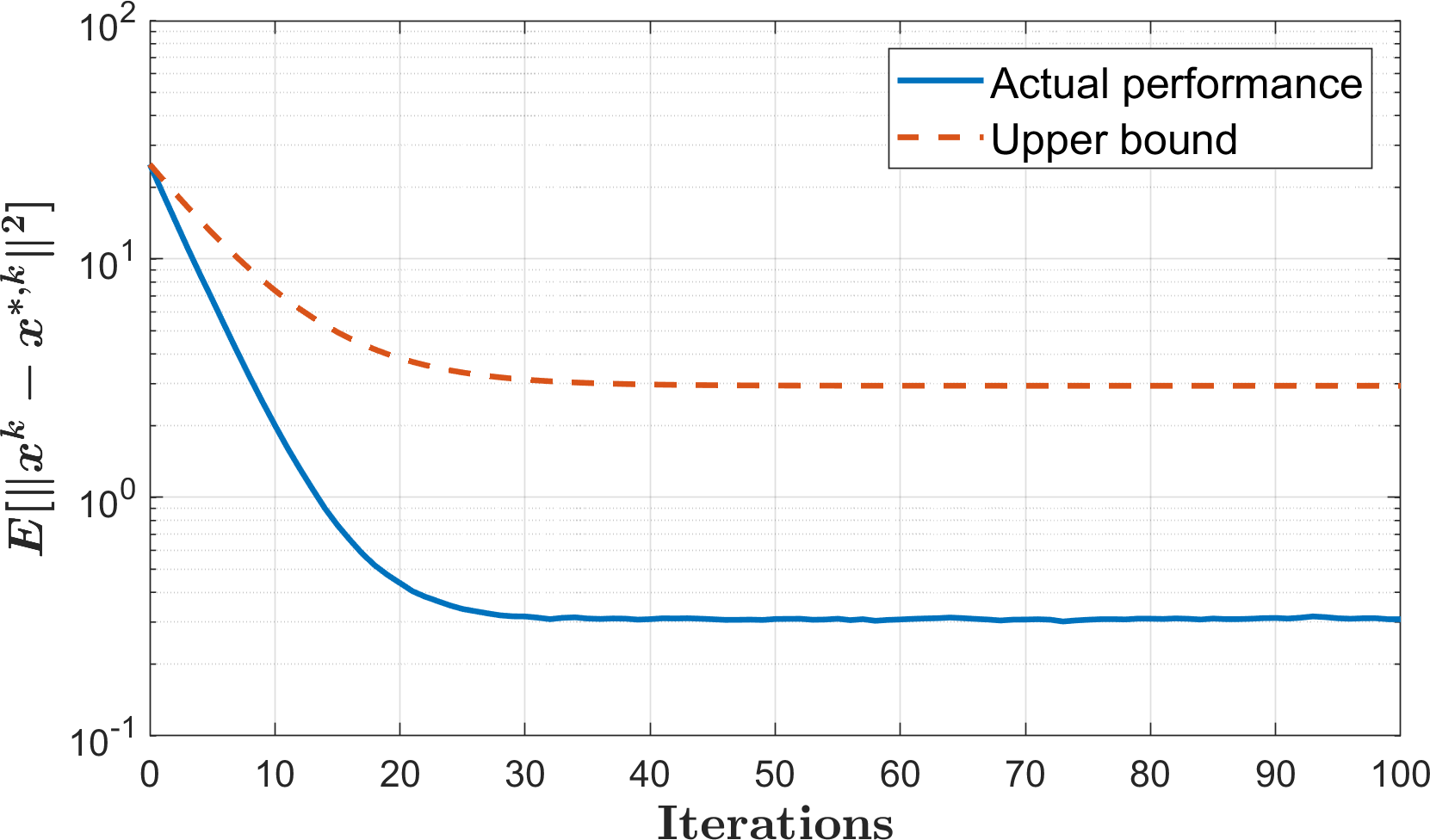}
\caption{Performance of the RCD algorithm in an open system of $5$ agents with $\kappa=1.2$, $b=1$ and $p_U=0.95$ (\textit{i.e.}, $\rho_R\approx0.053$), where each local objective function is quadratic.
The plain blue line represents the actual performance of the algorithm, and the dashed red line the upper bound \eqref{eq:Thm:OpenSystems:ConvRate:QuadFun} obtained from Theorem~\ref{Thm:OpenSystems:ConvRate:QuadFun}. The expected value was computed with 10000 realizations of the process. 
}
\label{Fig:pU_plot}
\end{figure}

\paragraph{PESTO analysis}
The possibility to improve our bound is also illustrated by an analysis performed using the PESTO toolbox \cite{PESTO}, which allows deriving numerical exact bounds for questions related to convex functions.
Using PESTO, we obtain an upper bound on a generalization of $\norm{x^{(2)}-x^{(1)}}^2$, where $x^{(1)}$ and $x^{(2)}$ are defined in \eqref{eq:OpenSystems:x+&x-}, for multi-dimensional functions $f_i$. 
Details on the way the analysis with PESTO was performed are presented in Appendix~\ref{Sec:Appendix:PESTO}.

The results of the PESTO analysis are presented in Fig.~\ref{fig:PESTO&conjecture}, and suggest a sublinear increase of the bound with $n$ for some fixed $\kappa$, and with $b=1$.
Additional numerical exploration of that result suggests a possible asymptotic independence of the bound with respect to $n$ for fixed values of $\kappa$ and with $b=1$ (similar results were observed for other values of $b$), and we conjecture the following bound, also illustrated in Fig.~\ref{fig:PESTO&conjecture}:
\begin{equation}
    \label{eq:conjecture}
    \norm{x^{(2)}-x^{(1)}}^2
    \leq  (\kappa+1)^2-\frac{c_1\kappa^3}{n+\kappa+c_2},    
\end{equation}
for some $c_1,c_2\in\R$.
The above conjecture would yield an equivalent result as that of Theorem~\ref{Thm:OpenSystems:ConvRate:QuadFun} with 
\begin{equation*}
    \Gamma' = 2(1-p_U)\prt{(\kappa+1)^2 - \frac{c_1\kappa^3}{n+\kappa+c_2}}.
\end{equation*}
Interestingly, whereas $\Gamma'$ grows in $\kappa^2$ for most values of $c_1$ and $c_2$, some choices yield a linear growth of $\Gamma'$ in $\kappa$ (\textit{e.g.}, if $c_1=1$, as shown in Fig.~\ref{fig:PESTO&conjecture}). 
Moreover, $\Gamma'$ does not grow with $n$ anymore, consistently with the improvement that we expect to achieve for future work.

\begin{figure}
    \centering
    \includegraphics[scale=0.5,clip = true, trim=2cm 11cm 2cm 11cm,keepaspectratio]{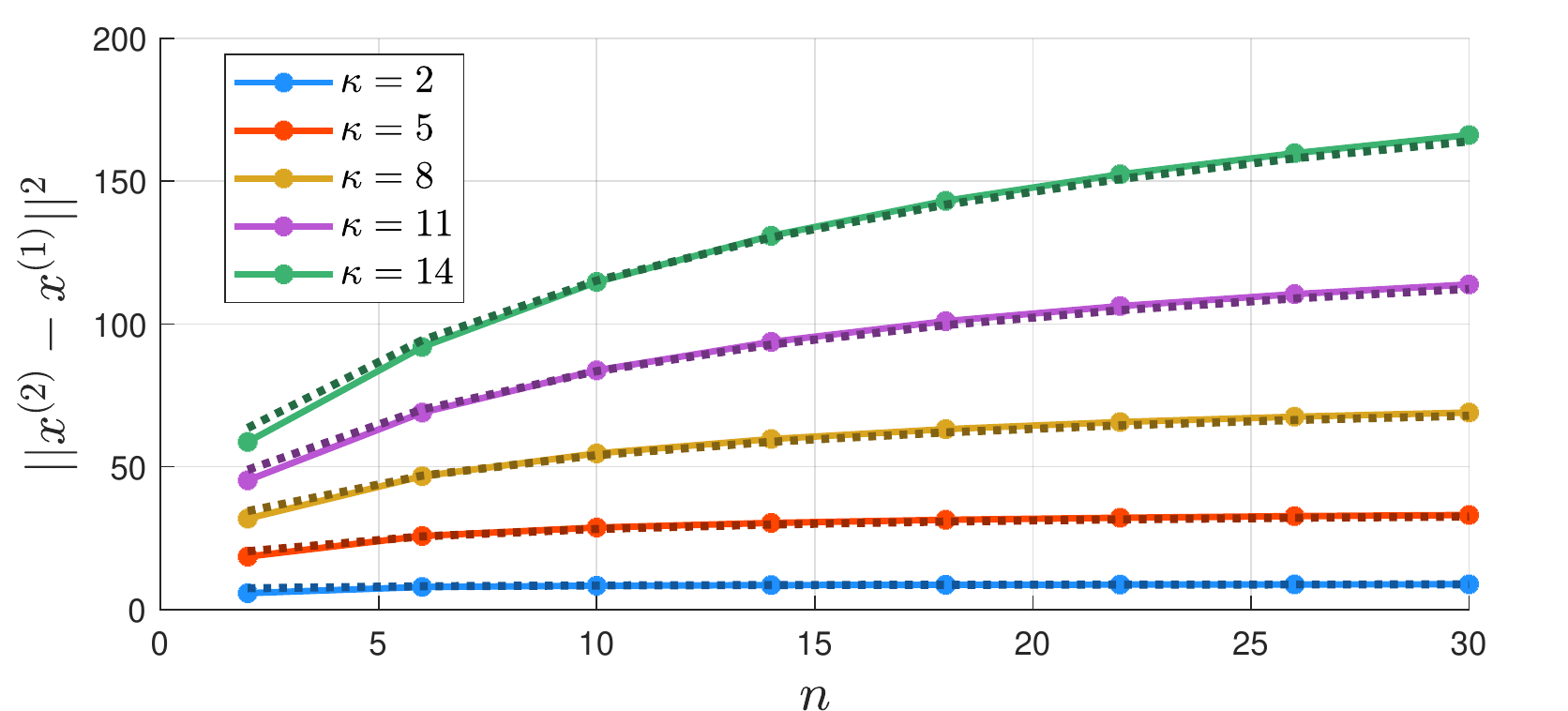}
    \caption{Evolution of the exact worst-case of $\norm{x^{(2)}-x^{(1)}}^2$ with $x^{(1)}$, $x^{(2)}$ as defined in \eqref{eq:OpenSystems:x+&x-}, obtained using PESTO with respect to $n$ for different values of $\kappa$, and with $b=1$ (plain line). The results are compared with the conjecture \eqref{eq:conjecture} with $c_1=1$ and $c_2=1$ (dotted line).
    }
    \label{fig:PESTO&conjecture}
\end{figure}

\section{Conclusion}

In this work we analyzed the random coordinate descent algorithm for a complete graph in an open multi-agent systems scenario when agents can be replaced during the iterations. 
We analyzed the behavior of the minimizer under replacement events, and derived an upper bound for the error in expectation and conditions for its stability.

As future work, we would like to improve the bounds for general classes of functions following the discussion on tightness performed in Section~\ref{Sec:OpenSystems:QuadFun}, especially since tighter bounds were obtained for particular settings and can be conjectured empirically.
Possible extensions include considering agents interacting through networks with different graph topologies, and generalizing the constraint to general $a\in\R^n$. 
Also, it would be interesting to consider the case where the states of the agents in the network are $d$-dimensional and where more than one edge can be updated at each iteration. 

\bibliographystyle{IEEEtran}
\bibliography{CDC2021_FINAL.bib}

\appendix

\subsection{Proof of Theorem~\ref{Thm:OpenSystems:ConvRate:QuadFun}}
\label{Sec:Appendix:ProofQuadFun}

\begin{proposition}
\label{Prop:QuadBound}
Under Assumptions~\ref{Ass:Statement:Fab}, \ref{Ass:Statement:B(0,1)}, \ref{Ass:Statement:a=1} and \ref{Ass:QuadFun}, for $x^{(1)}$ and $x^{(2)}$ as defined in \eqref{eq:OpenSystems:x+&x-}, there holds
\begin{align}
    \norm{x^{(2)}-x^{(1)}}^2
    &\le 8\prt{\dfrac{\kappa^3+\kappa n-2}{\kappa n}}\label{eq:Prop:QuadBound:difference}\\
    &+\dfrac{8(\vert b\vert+n)^2(\kappa-1)^2\kappa^2}{n^4}\prt{\kappa^2n^2+n-1}\nonumber.
\end{align}
\end{proposition}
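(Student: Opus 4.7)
The plan is to exploit the quadratic structure of the $f_i$'s to derive closed-form expressions for the constrained minimizers via Lagrangian optimality, and then to bound the difference $\norm{x^{(2)}-x^{(1)}}^2$ by reducing it to a single scalar quantity. Since each $f_i^{(j)}(x_i) = \theta_i^{(j)}(x_i-\mu_i^{(j)})^2$ has derivative $2\theta_i^{(j)}(x_i - \mu_i^{(j)})$, the optimality condition $\nabla f^{(j)}(x^{(j)}) = \lambda^{(j)}\mathds{1}_n$ combined with the constraint $\inProd{\mathds{1}_n}{x^{(j)}} = b$ yields the closed form $x_i^{(j)} = \mu_i^{(j)} + \lambda^{(j)}/(2\theta_i^{(j)})$ with $\lambda^{(j)} = (b - S^{(j)})/T^{(j)}$, where $S^{(j)} := \sum_{i=1}^n \mu_i^{(j)}$ and $T^{(j)} := \sum_{i=1}^n 1/(2\theta_i^{(j)})$.

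Next, I would exploit that only the parameters of agent $n$ change between the two problems. For $i < n$ one has $x_i^{(2)} - x_i^{(1)} = (\lambda^{(2)} - \lambda^{(1)})/(2\theta_i)$, and summing the feasibility constraint $\sum_i x_i^{(j)} = b$ over $i < n$ forces $x_n^{(2)} - x_n^{(1)} = -(\lambda^{(2)}-\lambda^{(1)})\tilde T$, where $\tilde T := \sum_{i<n} 1/(2\theta_i)$. This collapses the squared vector difference into the scalar identity
\begin{equation*}
\norm{x^{(2)}-x^{(1)}}^2 = (\lambda^{(2)}-\lambda^{(1)})^2 \left(\sum_{i<n} \tfrac{1}{4\theta_i^2} + \tilde T^2\right),
\end{equation*}
so the remaining work is to bound the scalar prefactor and the bracket.

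To that end, I would derive the algebraic identity $\lambda^{(2)} - \lambda^{(1)} = -(\Delta\mu_n + \lambda^{(1)} \Delta T_n)/T^{(2)}$, where $\Delta\mu_n := \mu_n^{(2)} - \mu_n^{(1)}$ and $\Delta T_n := 1/(2\theta_n^{(2)}) - 1/(2\theta_n^{(1)})$, directly from the formulas for $\lambda^{(j)}$. Then applying $(a+b)^2 \leq 2a^2 + 2b^2$ together with the Assumption~\ref{Ass:QuadFun} bounds $|\Delta\mu_n| \leq 2$, $|\Delta T_n| \leq (\kappa-1)/\beta$, $|\lambda^{(1)}| \leq \beta(|b|+n)/n$, and $T^{(2)} \geq n/\beta$ controls $(\lambda^{(2)}-\lambda^{(1)})^2$ by a sum of a term in $\beta^2/n^2$ and a term in $\beta^2(|b|+n)^2(\kappa-1)^2/n^4$. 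In parallel, $1/(2\theta_i) \in [1/\beta, 1/\alpha]$ gives $\sum_{i<n} 1/(4\theta_i^2) \leq (n-1)/\alpha^2$ and $\tilde T^2 \leq (n-1)^2/\alpha^2$, bounding the bracket. Multiplying the two estimates, replacing $\alpha^{-1}$ by $\kappa/\beta$ and collecting powers of $\kappa$, $n$, and $(|b|+n)$ produces the two pieces of the announced right-hand side.

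The main obstacle is the algebraic bookkeeping required to match the exact polynomial factors $\kappa^3+\kappa n-2$ and $\kappa^2 n^2 + n - 1$ in the stated bound. Because the bracket has two contributions and $(\lambda^{(2)}-\lambda^{(1)})^2$ also has two contributions (one from $\Delta\mu_n$, one from $\lambda^{(1)}\Delta T_n$), there are four cross-terms whose coefficients must be tracked carefully through the sum-of-squares expansion. The cleanest route appears to be to keep the ratio $\tilde T/T^{(2)} \leq 1$ together and bound $\sum_{i<n}(1/(2\theta_i))^2/(T^{(2)})^2$ via a $\kappa/n$-type estimate using Cauchy-Schwarz before invoking $(a+b)^2 \leq 2a^2+2b^2$ on the remaining scalar, so that the $\kappa$-dependence propagates through in the tight way demanded by the statement rather than being inflated by premature worst-case substitutions.
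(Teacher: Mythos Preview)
Your plan is correct in spirit and the scalar reduction you propose is valid: the Lagrangian formula, the identity $x_i^{(2)}-x_i^{(1)}=(\lambda^{(2)}-\lambda^{(1)})/(2\theta_i)$ for $i<n$, the use of the constraint to force $x_n^{(2)}-x_n^{(1)}=-(\lambda^{(2)}-\lambda^{(1)})\tilde T$, and the expression $\lambda^{(2)}-\lambda^{(1)}=-(\Delta\mu_n+\lambda^{(1)}\Delta T_n)/T^{(2)}$ are all exact. This collapses the problem to a single scalar estimate in an elegant way.

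The paper, however, does \emph{not} argue this way. It bounds each coordinate $(x_i^{(1)}-x_i^{(2)})^2$ individually, in two passes: first it freezes $\mu_n$ and lets only $\theta_n$ vary, obtaining separate estimates for $i\neq n$ and $i=n$; then it reintroduces the change $\delta_\mu=\mu_n^{(1)}-\mu_n^{(2)}$ via the triangle inequality and $(a+b)^2\le 2a^2+2b^2$, again treating $i\neq n$ and $i=n$ separately. The final bound is then the sum of $(n-1)$ copies of the $i\neq n$ estimate plus the $i=n$ estimate, and the specific polynomials $\kappa^3+\kappa n-2$ and $\kappa^2 n^2+n-1$ are artifacts of that coordinate-wise bookkeeping. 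Your scalar route, after applying $(a+b)^2\le 2a^2+2b^2$ and the crude bounds you list, naturally yields a bound of the same order but with different exact coefficients (for instance, your $\Delta\mu_n$ contribution comes out $\sim 8\kappa^2$ rather than $\sim 8$ unless you additionally exploit $\tilde T/T^{(2)}\le 1$ as you hint). So your concern in the last paragraph is well founded: matching the stated constants exactly would essentially require reverse-engineering the paper's per-coordinate split, whereas your decomposition more naturally produces an alternative, equally legitimate bound.
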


\begin{proof} 
Using Lagrange multipliers, we have:
$
\mathcal L(x,\lambda) = f(x) + \lambda(\mathds 1^\top x-b),
$
which yields
\begin{align*}
    \frac{\partial \mathcal L}{\partial x}
    &= \nabla f(x) + \lambda \mathds 1_n = 0
    \implies 
    x_i = \mu_i - \frac{\lambda}{2\theta_i};\\
    \frac{\partial \mathcal L}{\partial \lambda}
    &= \mathds 1_n^\top x-b = 0
    \implies \sum\nolimits_i \mu_i - \lambda \sum\nolimits_i \frac{1}{2\theta_i} = b,
\end{align*}
and hence the minimizer of each local function in \eqref{eq:Statement:Objective:OpenResourceAllocProblem} is given by
$
x_i^*= \mu_i + \frac{b - \sum_j\mu_j}{\sum_j\theta_i/\theta_j}.
$
The difference between the minimizers can be expressed as:

\vspace{-0.5cm}
\begin{small}
\begin{equation}\label{eq:norm_difference}
\norm{x^{(1)}-x^{(2)}}^2=\sum_{i=1}^{n-1}\prt{x^{(1)}_i-x^{(2)}_i}^2+\prt{x^{(1)}_n-x^{(2)}_n}^2.
\end{equation}
\end{small}
\vspace{-0.5cm}

To find an upper bound for \eqref{eq:norm_difference}  let consider first the difference between the minimizers when only $\theta_n$ changes. Let denote $M_n=\sum_{i=1}^{n-1}\mu_i$ and $\zeta_0=\sum_{j}^{n-1}1/\theta_j$, such that
$$
\zeta^{(1)}=\sum\nolimits_j^{n-1}\dfrac{1}{\theta_j}+\dfrac{1}{\theta_n^{(1)}}=\zeta_0+\dfrac{1}{\theta_n^{(1)}}
$$
$$
\zeta^{(2)}=\sum\nolimits_j^{n-1}\dfrac{1}{\theta_j}+\dfrac{1}{\theta_n^{(2)}}=\zeta_0+\dfrac{1}{\theta_n^{(2)}}.
$$

For $i\ne n$ we have:
\begin{align*}
    \prt{x_i^{(1)} - x_i^{(2)}}^2
    &= \prt{\mu_i + \tfrac{b-M_n-\mu_n}{\theta_i\zeta^{(1)}} - \mu_i - \tfrac{b-M_n-\mu_n}{\theta_i\zeta^{(2)}}}^2\\
    &=\tfrac{(b-M_n-\mu_n)^2}{\theta_i^2}\prt{\tfrac{\zeta^{(2)}-\zeta^{(1)}}{\zeta^{(1)}\zeta^{(2)}}}^2
\end{align*}
Since $\theta_i\in\frac{1}{2}[\alpha,\beta]$ we obtain:    
\begin{align}
    \prt{x_i^{(1)} - x_i^{(2)}}^2
    &\leq \tfrac{(b\!-\!M_n\!-\!\mu_n)^2}{\theta_i^2}4\prt{\tfrac{\beta\!-\!\alpha}{\alpha\beta}}^2\!\prt{\tfrac{1}{\zeta^{(1)}\zeta^{(2)}}}^2\nonumber\\
    &\leq 4\tfrac{(\vert b\vert+n)^2}{\alpha^2}\tfrac{(\kappa\alpha-\alpha)^2}{\alpha^2\beta^2}\tfrac{1}{n^4\beta^{-4}}\nonumber\\
    &=\tfrac{4(\vert b\vert+n)^2(\kappa-1)^2\kappa^2}{n^4}.\label{eq:i_neq_n}
\end{align}
For $i=n$ we have:
\begin{align}
    \prt{x_n^{(1)} - x_n^{(2)}}^2
    &= \prt{\mu_n + \tfrac{b-M_n-\mu_n}{\theta_n^{(1)}\zeta^{(1)}} - \mu_n - \tfrac{b-M_n-\mu_n}{\theta_n^{(2)}\zeta^{(2)}}}^2\nonumber\\
    &=(b-M_n-\mu_n)^2\prt{\tfrac{\theta_n^{(2)}\zeta^{(2)}-\theta_n^{(1)}\zeta^{(1)}}{\theta_n^{(1)}\theta_n^{(2)}\zeta^{(1)}\zeta^{(2)}}}^2\nonumber\\
    &=\frac{(b-M_n-\mu_n)^2\prt{\theta_n^{(2)}-\theta_n^{(1)}}^2\zeta_0^2}{\prt{\theta_n^{(1)}\theta_n^{(2)}\zeta^{(1)}\zeta^{(2)}}^2}\nonumber\\
    &\leq\dfrac{4(\vert b\vert+n)^2(\beta-\alpha)^2\kappa^4n^2}{\alpha^2n^4}\nonumber\\
    &=\dfrac{4(\vert b\vert+n)^2(\kappa-1)^2\kappa^4}{n^2}.\label{eq:i_eq_n}
\end{align}
Now we consider the general case when both $\mu_n$ and $\theta_n$ can change. Let denote $\delta_\mu=\mu_n^{(1)}-\mu_n^{(2)}$. 

For $i\ne n$ we have:
\begin{align*}
    \prt{x_i^{(1)}\! -\! x_i^{(2)}}^2\!\!
    &= \prt{\mu_i + \tfrac{b-M_n-\mu_n^{(1)}}{\theta_i\zeta^{(1)}} - \mu_i - \tfrac{b-M_n-\mu_n^{(2)}}{\theta_i\zeta^{(2)}}}^2\\
    &=\prt{\tfrac{b-M_n-\mu_n^{(2)}}{\theta_i\zeta^{(1)}}-\tfrac{b-M_n-\mu_n^{(2)}}{\theta_i\zeta^{(2)}}-\tfrac{\delta_\mu}{\theta_i\zeta^{(1)}}}^2\\
    &\leq \prt{\left\vert \tfrac{b-M_n-\mu_n^{(2)}}{\theta_i\zeta^{(1)}}-\tfrac{b-M_n-\mu_n^{(2)}}{\theta_i\zeta^{(2)}} \right\vert+\left\vert \tfrac{\delta_\mu}{\theta_i\zeta^{(1)}} \right\vert}^2\\
    &\leq \!2\!\prt{\tfrac{b-M_n-\mu_n^{(2)}}{\theta_i\zeta^{(1)}}\!-\!\tfrac{b-M_n-\mu_n^{(2)}}{\theta_i\zeta^{(2)}}}^2\!\!+\!2\!\prt{\tfrac{\delta_\mu}{\theta_i\zeta^{(1)}}}^2.
\end{align*}
Then, by using \eqref{eq:i_neq_n} we obtain:
\begin{align}
    \prt{x_i^{(1)}\! -\! x_i^{(2)}}^2\!
    &\leq \dfrac{8(\vert b\vert+n)^2(\kappa-1)^2\kappa^2}{n^4}\!+\!2\prt{\frac{2}{\alpha \sum_j 1/\beta}}^2\nonumber\\
    &\leq \dfrac{8(\vert b\vert+n)^2(\kappa-1)^2\kappa^2}{n^4}+\dfrac{8\kappa^2}{n^2}.\label{eq:bound_i_neq_n}
\end{align}

For $i=n$ we have:
\begin{align*}
    \prt{x_n^{(1)}\! -\! x_n^{(2)}}^2
    \!&= \prt{\mu_n^{(1)} + \tfrac{b-M_n-\mu_n^{(1)}}{\theta_n^{(1)}\zeta^{(1)}} - \mu_n^{(2)} - \tfrac{b-M_n-\mu_n^{(2)}}{\theta_n^{(2)}\zeta^{(2)}}}^2\\
    &=\prt{\delta_\mu\!+\!\tfrac{b-M_n-\mu_n^{(2)}}{\theta_n^{(1)}\zeta^{(1)}}\!-\!\tfrac{b-M_n-\mu_n^{(2)}}{\theta_n^{(2)}\zeta^{(2)}}\!-\!\tfrac{\delta_\mu}{\theta_n^{(1)}\zeta^{(1)}}}^2\\
    &\leq \Big(\left\vert \tfrac{b-M_n-\mu_n^{(2)}}{\theta_n^{(1)}\zeta^{(1)}}-\tfrac{b-M_n-\mu_n^{(2)}}{\theta_n^{(2)}\zeta^{(2)}} \right\vert\\
    &\quad\; +\left\vert \delta_\mu- \tfrac{\delta_\mu}{\theta_n^{(1)}\zeta^{(1)}} \right\vert\Big)^2\\
    &\leq 2\prt{\tfrac{b-M_n-\mu_n^{(2)}}{\theta_n^{(1)}\zeta^{(1)}}-\tfrac{b-M_n-\mu_n^{(2)}}{\theta_n^{(2)}\zeta^{(2)}}}^2\\
    &\quad\;+2\prt{\delta_\mu-\tfrac{\delta_\mu}{\theta_n^{(1)}\zeta^{(1)}}}^2.
\end{align*}
Then, by using \eqref{eq:i_eq_n} we obtain:
\begin{align}
    \prt{x_n^{(1)} - x_n^{(2)}}^2\!\!
    &\!\leq\! \tfrac{8(\vert b\vert+n)^2(\kappa-1)^2\kappa^4}{n^2}\nonumber\\
    &\quad\! +8\prt{1-\tfrac{2}{\theta_n^{(1)}\zeta^{(1)}}+\tfrac{1}{(\theta_n^{(1)}\zeta^{(1)})^2}}\nonumber\\ 
    &\leq\! \tfrac{8(\vert b\vert+n)^2(\kappa-1)^2\kappa^4}{n^2}\!+\!8\prt{1\!-\!\tfrac{2}{\kappa n}\!+\!\tfrac{\kappa^2}{n^2}}.\label{eq:bound_i_eq_n}
\end{align}
Finally, using \eqref{eq:bound_i_neq_n} and \eqref{eq:bound_i_eq_n} in \eqref{eq:norm_difference} yields the conclusion.
\end{proof}

The proof of Theorem~\ref{Thm:OpenSystems:ConvRate:QuadFun} follows the same steps as that of Theorem~\ref{Thm:OpenSystems:ConvRate:generalFi} using Proposition~\ref{Prop:QuadBound} instead of Proposition~\ref{Prop:OpenSystems:EConvRate:repl}.

\subsection{PESTO implementation for the tightness analysis}
\label{Sec:Appendix:PESTO}

In this section we describe how the analysis relying on the PESTO toolbox presented in Section~\ref{Sec:OpenSystems:QuadFun} was performed.
This toolbox was initially developed to numerically compute the exact worst-case performance of first-order convex optimization algorithms, and more generally allows deriving exact bounds on questions related to convex functions.

We consider a general setting with multi-dimensional functions $f_i:\R^d\to\R$ that are $\alpha$-strongly convex and $\beta$-smooth, with $\argmin_x f_i(x) \in B(0,1)$.
We use PESTO to evaluate $\max \norm{x^{(2)}-x^{(1)}}^2$, where $x^{(1)}$ and $x^{(2)}$ are defined as in \eqref{eq:OpenSystems:x+&x-} for that multi-dimensional setting, and where we impose $\sum_i x_i^{(1)} = \sum_i x_i^{(2)} = v_b$, for some vector $v_b$ satisfying $\norm{v_b}=b$.
This setting is exactly equivalent to that of Proposition~\ref{Prop:OpenSystems:minChange} when $d=1$. 
However, PESTO does not allow imposing $d=1$ in the implementation, so that we solve this more general problem, whose solution will thus also be valid for more general values of $d$.

Hence, the problem is implemented in PESTO as $\max\norm{x^{(2)}-x^{(1)}}^2$, with \emph{the variables of the problem being the functions $f_i$, the vector $v_b$ and the decision variables $x_i$}, so that it derives the performance achieved by the empirical worst-case instance of the problem. 

The result that is obtained for $\kappa=5$ and $b=1$ with respect to $n$ is presented in Fig.~\ref{fig:PESTO}.
The top plot shows the sublinear increase of the bound with $n$, and suggests its possible asymptotic independence in $n$.
The bottom plot seems to confirm that conjecture, and it is expected that the bound converges to $(\kappa+1)^2$ as $n\to\infty$, consistently with the analysis and conjecture presented in Section~\ref{Sec:OpenSystems:QuadFun}.

\begin{figure}
    \centering
    \includegraphics[scale=0.6,clip = true, trim=3.5cm 8.5cm 2.5cm 8.5cm,keepaspectratio]{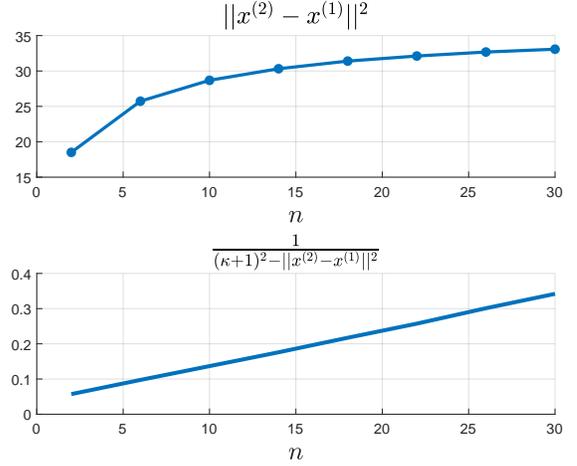}
    \caption{(Top): Evolution of the exact worst-case empirical performance for $\norm{x^{(2)}-x^{(1)}}^2$ with $\kappa=5$ and $b=1$ with respect to $n$. (Bottom): Variation of the result presented in (top), suggesting the possible asymptotic independence of the bound with $n$.}
    \label{fig:PESTO}
\end{figure}

\end{document}